\newcommand{\maxim}[2]{#2}
\newcommand{\chris}[2]{#2}
\newcommand{\IsSubmission}{0}
\protected\def\verythinspace{%
  \ifmmode
    \mskip0.3\thinmuskip
  \else
    \ifhmode
      \kern0.050004em
    \fi
  \fi
}
\DeclareMathOperator{\sign}{sign}
\theoremstyle{plain}
\newtheorem{theorem}{Theorem}[section]
\newtheorem{lemma}[theorem]{Lemma}
\newtheorem{proposition}[theorem]{Proposition}
\theoremstyle{definition}
\newtheorem{remark}[theorem]{Remark}
\begin{document}
\boldmath
\begin{center}{\Large \textbf{
Growth of the Wang-Casati-Prosen counter in an integrable billiard
}}\end{center}
\unboldmath
\begin{center}
Z. Hwang\textsuperscript{1},
C. A. Marx\textsuperscript{2},
J. Seaward\textsuperscript{3},
S. Jitomirskaya\textsuperscript{4}
M. Olshanii\textsuperscript{1*},
\end{center}
\begin{center}
{\bf 1} Department of Physics, University of Massachusetts Boston, Boston, MA 02125, USA
\\
{\bf 2} Department of Mathematics, Oberlin College, Oberlin, OH 44074, USA
\\
{\bf 3} Universit\'e Sorbonne Paris Nord, CNRS UMR 7538, Laboratoire de physique des lasers, F-93430, Villetaneuse, France
\\
{\bf 4} Department of Mathematics, University of California, Irvine, CA 92697-3875, USA
\vskip 0.25\baselineskip
{\scriptsize *} maxim.olchanyi@umb.edu
\end{center}
%
%
%
%
%
\boldmath
\section*{Abstract}
{\bf
This work is motivated by an article by Wang, Casati, and Prosen [Phys.~Rev.~E vol.~89, 042918 (2014)] devoted to a study of ergodicity in two-dimensional irrational right-triangular billiards. Numerical results presented there suggest that these billiards are generally not ergodic. However, they become ergodic when the billiard angle is equal to $\pi/2$ times a Liouvillian irrational, a Liouvillian irrational, a class of irrational numbers which are well approximated by rationals. 

In particular, Wang et al.\ study a special integer counter that reflects the irrational contribution to the velocity orientation; they conjecture  that this counter is localized in the generic case, but grows in the  Liouvillian  case. We propose a generalization of the Wang-Casati-Prosen counter: this generalization allows to include rational billiards into consideration. We show that in the case of a  $\ang{45} \!\! : \!\ang{45} \!\! : \! \ang{90}$ billiard, the counter grows indefinitely, consistent with the Liouvillian scenario suggested by Wang et al. 

}
\unboldmath
\vspace{10pt}
\noindent\rule{\textwidth}{1pt}
\tableofcontents
\noindent\rule{\textwidth}{1pt}
\vspace{10pt}

\ifnum \IsSubmission=1
\else
\thispagestyle{FirstPage} 
\fi


\section{Introduction}
\label{sec:introduction}

In Ref.~\cite{wang2014_042918} Wang, Casati, and Prosen studied ergodicity in two-dimensional irrational right-triangular billiards. The numerical results presented there suggest that while these billiards are not ergodic in general, they become ergodic when the billiard angle is equal to $\pi/2$ times a Liouvillian irrational. The latter is a class of numbers with properties lying in between irrational and rational. 

Authors present an elegant semi-empirical construction that sheds light to the mechanisms behind the ergodicity breaking. Using numerical evidence, they conjecture that for a given rational approximant of the billiard angle (in units of $\pi/2$) ergodicity requires an exponentially long time to establish, while the validity of the approximant expires in a linear time, both in terms of approximant's denominator. They state  a very modest necessary condition for the ergodicity of the original, irrational billiard: for each rational approximant, ergodicity must be reached before approxomant's \emph{successor} becomes invalid. A generic irrational number for the billiard angle won't pass this test. However, Liouville numbers---numbers for which the error of a rational approximation decreases faster than any negative power of its denominator---produce billiards that satisfy the above necessary condition. If the propagation time is bounded from above, Liouville builliards are indistinct from the rational ones: expectedly, the rational billiards  themselves  also pass the test.      

As a quantitative measure of ergodicity, Wang et al. introduce a special integer counter (that we will call, in this text, a Wang-Casati-Prosen counter) that reflects the irrational contribution to the velocity orientation. Ergodicity requires that this counter grows indefinitely with time. Authors provide a numerical evidence that the counter is localized in the generic irrational billiards. They further conjecture that the counter grows in the Liouvillian case. They base this conjecture on the fact that Liouville billiards satisfy the thecessary condition for the ergodicity described in the previous paragraph. This former conjecture is also consistent with the rigorous results presented in \cite{vorobets1996_756}, for a subset of Liouville numbers.    

Motivated by Wang et al., here  we study an extencion of the Wang-Casati-Prosen counter for a $\ang{45} \!\! : \!\ang{45} \!\! : \! \ang{90}$ billiards---a rational billiard featuring only eight velocity orientations for any generic trajectory---and find, in accordance with the scenario suggested in \cite{wang2014_042918} for the Liouville numbers, that this counter grows indefinitely. 
We prove the absence of localization by identifying a subsequence along which the Wang-Casati-Prosen  counter 
provably grows, albeit in a logarithmic fashion.  

At a technical level, analytical results we obtained became possible thanks to the solvability of the $\ang{45} \!\! : \!\ang{45} \!\! : \! \ang{90}$ billiards using the method of images.   

\section{Motivation: Wang-Casati-Prosen counter in right-triangular billiards. First appearance of the rational billiards}
\label{sec:motivation}
In \cite{wang2014_042918}, authors consider a single two-dimensional point-like particle moving in a right-triangular billiards (see Fig.~\ref{f:WCP}) of an acute angle 
\maxim{$\tilde{\alpha}$ ($\alpha$ in the original)}{$\tilde{\alpha}/2 = ((\sqrt{5}-1)/4) \pi$ ($\alpha/2$ in the original)}. 
\begin{figure}[h]
\begin{center}
\includegraphics[width=0.5\textwidth]{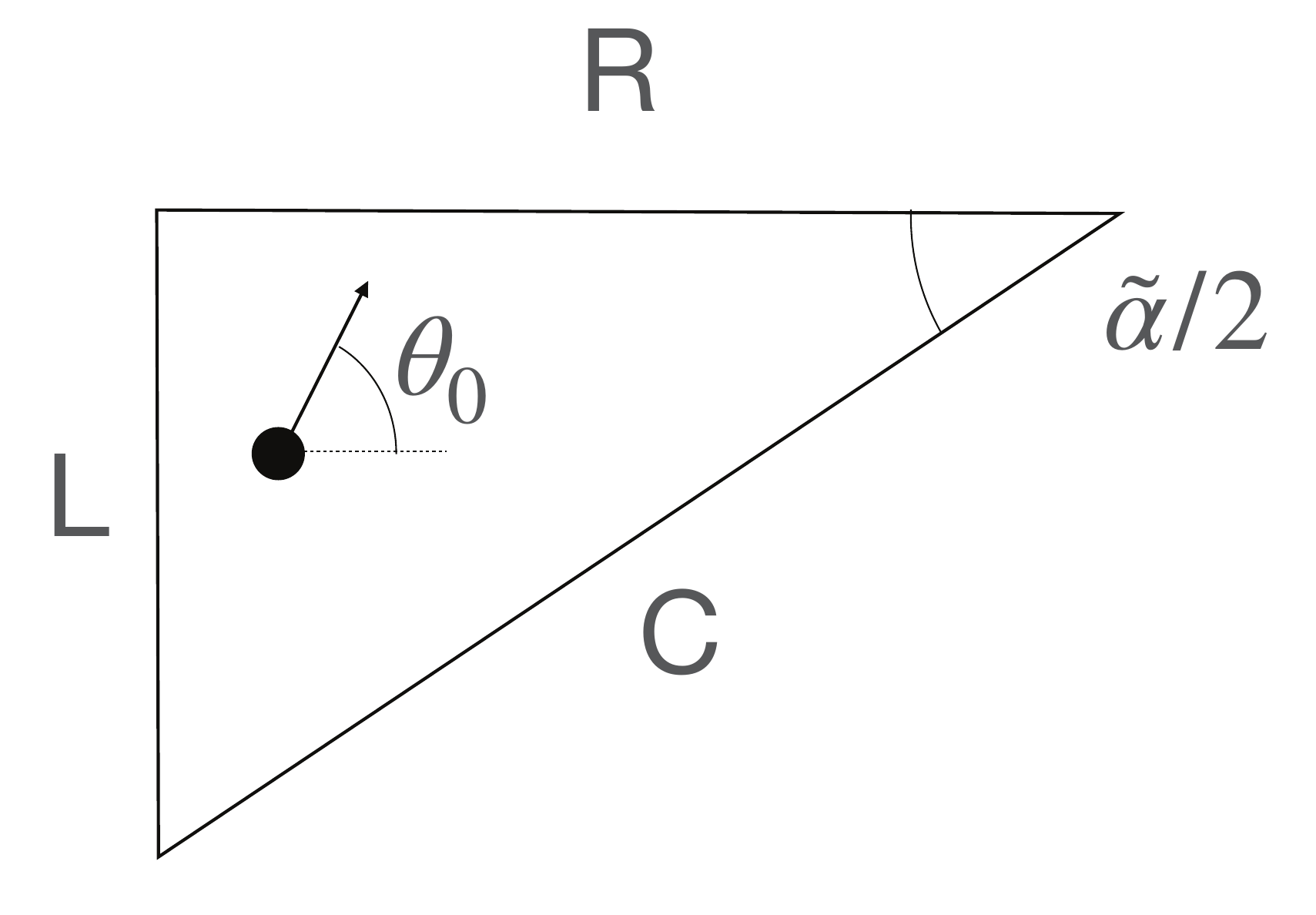}
\end{center}
\caption
{
The right-triangular billiards considered in \cite{wang2014_042918}. $\theta_{0}$ gives the orientation of the initial velocity. $\tilde{\alpha}/2$
is one of the angles. See
text for the rationale behind the names of the sides of the triangle.  
}
\label{f:WCP}
\end{figure}
The physical model behind the right-triangular billiards consists of two one-dimensional hard-core point-like particles between two walls. The 
map is described in \cite{wang2014_042918} and elsewhere. According to the map, the vertical cathetus, the horizontal cathetus, 
and the hypothenuse correspond respectively to the left particle colliding with a wall ($L$), the right particle colliding with a wall ($R$), and a particle-particle 
collision ($C$); hence the naming of the sides in Fig.~\ref{f:WCP} and throughout the text. 

A particle's trajectory starts at some point inside the billiards, with a velocity vector at an angle $\theta_{0}$ to the horizontal axis. The authors of \cite{wang2014_042918}
observe that at any instant of time, the angle $\theta$ between the velocity and the horizontal has the form
\begin{align*}
\theta = K \tilde{\alpha} + 
\left\{ 
\begin{array}{l} 
+\theta_{0}
\\
\text{or}
\\
-\theta_{0}
\\
\text{or}
\\
+(\pi-\theta_{0})
\\
\text{or}
\\
-(\pi-\theta_{0})
\end{array} \right 
\}
\,\,,
\end{align*}
with the \emph{counter} $K$ being an integer. The paper studies the growth of the magnitude of $K$ over time. The numerical evidence presented 
\maxim{}{in \cite{wang2014_042918}} suggests that $K$ is localized within 
a $\sim\pm 20$ range around the origin, suggesting absence of ergodicity. 
\maxim{}{The authors suggest that this phenomenon is general for generic $\tilde{\alpha}$, while conjecturing an absence of localization for
$\tilde{\alpha} = \text{Liouvillian\_number} \times \pi$.} 
\maxim{In our paper we study an analogue of the counter $K$ for the case $\tilde{\alpha} = 90^{\circ}$, 
and find no localization.}{In our paper, we study an analogue of the counter $K$ for the \emph{rational} case of $\tilde{\alpha} = \pi/2$. 
We find no localization, in accordance with the expectations of \cite{wang2014_042918}.}

To proceed, observe that the velocity orientation, and, thus, the counter value does not change between the particle-wall collisions. Hence,  the counter $K$ is a function 
of the number $i$ of particle-wall collisions prior. As such, the temporal index $i$ labels the time intervals between two successive partcle-wall collisions. Accordingly, from now on, we will denote $K$ as $K_{i}$. It is easy to show (see  \cite{wang2014_042918}) that the rule 
for updating the counter $K_{i}$ is as follows:
\begin{align}
\begin{split}
&
K_{i+1} = 
\left\{
\begin{array}{ll}
-K_{i} + 1\,, & \begin{minipage}[t]{0.4\textwidth}
                 if the event that separates the $i+1$'st and $i$'th time interval is $C$\\
                 \mbox{}
                \end{minipage}
\\
-K_{i}\,, & \begin{minipage}[t]{0.4\textwidth}
if the event that separates the $i+1$'st and $i$'th time interval is $L$ or $R$ 
\end{minipage}
\end{array}
\right\}
\\
&
K_{0} = 0\,.
\end{split}
\label{K}
\end{align}

This is a \emph{crucial moment}. Let's try to assign a rational value to $ \tilde{\alpha}/\pi$. The meaning of the 
counter $K$ as the value of the $\tilde{\alpha}/\pi$ contribution to the velocity orientation will be completely lost: there will be an infinite 
multiplicity of values of $K$ leading to the same velocity orientation. However, the dynamics of the counter $K$ \emph{itself} remains nontrivial, constituting 
a viable object of study. More interestingly, it remains nontrivial (i.e.\ no closed form formula exists for the counter $K$ as a function of $i$, for a given initial condition) even the for the two \emph{integrable} 
right-triangular billiards: $\ang{30} \!\! : \!\ang{60} \!\! : \! \ang{90}$ an $\ang{45} \!\! : \!\ang{45} \!\! : \! \ang{90}$ respectively. The latter billiard is the one our article focusses on.

\section{Irrational rotations}
\label{sec:object_of_study}

Given an $\alpha: 0\le\alpha<1$, we consider the $\alpha$-rotational trajectory emerging for $x_0 \in [0,1)$,
\begin{align} \label{x}
x^{(\alpha)}_{j+1} &= ( x^{(\alpha)}_{j} + \alpha )\mod 1 ~\mbox{, } j \in \mathbb{N}_0 ~\mbox{.} \\
x^{(\alpha)}_{0} &= x_0 ~\mbox{,}
\end{align}
where $\mathbb{N}_0:= \mathbb{N} \cup \{0\}$.

For a fixed $\beta: 0\le \beta < 1$, introduce an ``observable'' $f^{(\beta)}: [0,1) \to   \maxim{[0,1)}{\{-1,\,+1\}}$, 
\begin{align}
f^{(\beta)}(x) =
\left\{
\begin{array}{ccl}
+1\;, & \text{for} & x \in \mathcal{I}_{\text{I}} := [0, \beta) ~\mbox{,}
\\
-1\;, & \text{for} & x \in \mathcal{I}_{\text{II}} := [\beta , 1) ~\mbox{.}
\end{array}
\right.
\label{f_function}
\,\,
\end{align}
and a corresponding sequence
\begin{align} \label{f}
f^{(\alpha,\,\beta)}_{j} :=  f^{(\beta)}(x^{(\alpha)}_{j})  ~\mbox{, } j \in \mathbb{N}_0 ~\mbox{.}
\end{align}
Finally, for each $j \in \mathbb{N}_0$, consider an ``increment''
\begin{align} \label{epsilon}
\epsilon^{(\alpha,\,\beta)}_{j} := \chi(f^{(\alpha,\,\beta)}_{j}) \prod_{j'=0}^{j} f^{(\alpha,\,\beta)}_{j'} ~\mbox{,}
\end{align}
with
\begin{align}
\chi(f) :=
\left\{
\begin{array}{ccc}
2\;, & \text{for} & f = +1
\\
1\;, & \text{for} & f = -1
\end{array}
\right.
\label{eta_function}
\,\,.
\end{align}
Our primary object of interest is the following ``counter'':
\begin{align} \label{S}
S^{(\alpha,\,\beta)}_{j} = \sum_{j'=0}^{j-1} \epsilon^{(\alpha,\,\beta)}_{j'} ~\mbox{, } j \in \mathbb{N}_0 ~\mbox{.}
\end{align}
A similar object was considered in \cite{veech1968}.
There, $\mathcal{I}_{\text{II}}$ was any connected nonempty subset of $[0,1)$ and $\chi(f) = 1$. 

\chris{%
}
{
We recognize that there is a wide range of papers in the mathematics literature, written from a more general point of view, which are dedicated to studying the ergodic properties of billiards and cocycle maps of the type considered here. As mentioned earlier, our model can be considered a part of a certain family of dynamical systems which were first investigated in 1968 by W.A. Veech in \cite{veech1968}. Without aiming for a complete list, we refer the reader to e.g. \cite{Conze_2009,Troubetzkoy_2004} and \cite{Squillace_2017, Cheung_Hubert_Masur_2011, DelecroixHubertLelievre_2014} for a few more recent general accounts of recurrence and ergodicity of cocycles over a rotation in relation to the model considered here.

Rather than aiming for generality, the purpose of this work is to construct {\emph{explicit}} analytical examples for a concrete physical model of interest, for which the numerical studies in \cite{wang2014_042918} indicated an interesting behavior of the counter depending on Diophantine properties. 
}

Specifically, our main result, Theorem \ref{thm_main}, considers the ``diagonal'' case
\begin{align*}
\alpha = \beta =: \sigma ~\mbox{,}
\end{align*}
\chris{%
for which we prove:%
}
{%
for which, motivated by the results in \cite{wang2014_042918}, we provide analytical evidence for a subtle dependence of the counter on the Diophantine properties of the parameter $\sigma$. Specifically, we explicitly describe a set of irrational values of $\sigma$ for which we can verify indefinite growth of the counter for a full-measure set of initial conditions:%
}
\begin{theorem} \label{thm_main_main}
There exists a non-empty set of irrationals $\Sigma \subseteq (0,1)$ \chris{%
such that for each %
} 
{%
so that for each $\sigma \in \Sigma$, the following holds: %
}
There is an associated full measure set of initial conditions $\Omega(\sigma)$ such that for all $x_0 \in \Omega(\sigma)$, one has
\begin{equation} \label{eq_thm_main_main}
\limsup_{j \to \infty} \left|S_{j}\right| = \infty ~\mbox{.}
\end{equation}
\end{theorem}
\chris{%
}
{%
\begin{remark}
The set $\Sigma$ in Theorem \ref{thm_main_main} is described explicitly in terms of continued fraction expansion of its elements, see Theorem \ref{thm_main} of Section \ref{sec:main}. The description of the full measure set of initial conditions $\Omega(\sigma)$ is given in Remark \ref{remark_fullmeasinit} of Section \ref{sec:main}. It remains an interesting open question whether the conditions on $\sigma$ and $x_0$ are indeed necessary and whether the growth along a subsequence could be strengthened to growth along the entire sequence, which would replace the $\limsup$ in (\ref{eq_thm_main_main}) by a limit; see also our remarks at the end of Section \ref{sec:main}.
\end{remark}

}

\section{Connection between the irrational rotations and the Wang-Casati-Prosen counter in a  $\ang{45} \!\! : \!\ang{45} \!\! : \! \ang{90}$ billiard}
\label{sec:method}
Let us return back to the end of the Section~\ref{sec:motivation}.
Without loss of generality, assume that $i=0$ labels an interval between an $L$ and an $R$ event, the former preceded by $C$ (see Fig.~\ref{f:WCP}). Name the two events preceding the interval $i=0$ as $C^{\star}$ and $L^{*}$, 
for future reference. Introduce a counter 
\begin{align*}
P_{i} := (-1)^{i+1} K_{i} \,,
\,\,
\end{align*}
a temporal index 
\begin{align*}
\bar{m}(i) := \text{\# $C$-events between the $C^{\star}$-event and the $i$'th interval, excluding $C^{\star}$}
\,\,,
\end{align*}
and another temporal index 
\begin{align*}
i'(m) := \inf\left\{ i : \bar{m}(i) = m  \right\}
\,\,.
\end{align*}
Observe that the counter $P_{i}$ is a function of $\bar{m}(i)$ alone as it does not change after $L$ and $R$ events. Accordingly, introduce a counter 
\begin{align*}
Q_{m} := P_{i'(m)} 
\,\,.
\end{align*}
Recall that, by construction, 
\begin{align*}
|Q_{m}| = |K_{i'(m)}| 
\,\,,
\end{align*}
so that if $Q_{m}$ is found to be unbounded, then $K_{i}$ will be unbounded as well. 

Note also that $m$ labels a temporal interval between two successive $C$ events. A crucial observation is that the dynamics of the counter $Q_{m}$ is governed by
\begin{align}
\begin{split}
&
Q_{m+1} = Q_{m} + \varepsilon_{m} 
\\
&
Q_{0} = 0
\end{split}
\,\,,
\label{Q}
\end{align}
with
\begin{align}
\varepsilon_{m} := (-1)^{M}
\,,
\label{varepsilon}
\end{align}
where
\begin{align*}
M := \begin{minipage}[t]{0.6\textwidth}
      \centering
      \# $CLRC $- or $CRLC$-intervals between the $C^{\star}$-event, inclusive, and the $m$'th interval, inclusive.
     \end{minipage}
\end{align*}

So far, our discussion concerned a generic right triangle of Fig.~\ref{f:WCP}. From now on, let us assume a $45^{\circ}\!\!:\!45^{\circ}\!\!:\!90^{\circ}$ billiards, i.e.~that
\begin{align}
\tilde{\alpha} = \frac{\pi}{2}
\,\,.
\label{alpha_tilde_fixed}
\end{align}
Also assume, without loss of generality, that 
\begin{align}
\frac{\pi}{4} \le \theta_{0} < \frac{\pi}{2}
\,\,.
\label{theta0_fixed}
\end{align}

Thanks to the method of images---described in the caption of Fig.~\ref{f:ZAI}---particle's trajectory becomes fully predictable, even for long propagation times, with no sensitivity 
to the initial conditions. The evolution of the counter $Q$ (see \eqref{Q}-\eqref{varepsilon}) remains sensitive to the initial conditions. During a sequence of $CLRC$ and $CRLC$ fragments, 
the counter $Q$ lingers around a particular value, with no substantial evolution. Such sequences are interrupted by an occasional (isolated, given \eqref{alpha_tilde_fixed}-\eqref{theta0_fixed}) $CRCLC$ fragment that changes the value of the counter by $\pm 2$. Whether the change keeps the sign of the previous $\pm 2$ jump or flips it  depends on the parity of the number of the $CLRC$ and $CRLC$  fragment in between. This parity, 
in turn, can be altered by a small change in the initial position, leading to large deviations at long propagation times.  
\begin{figure}[h]
\begin{center}
\includegraphics[width=0.8\textwidth]{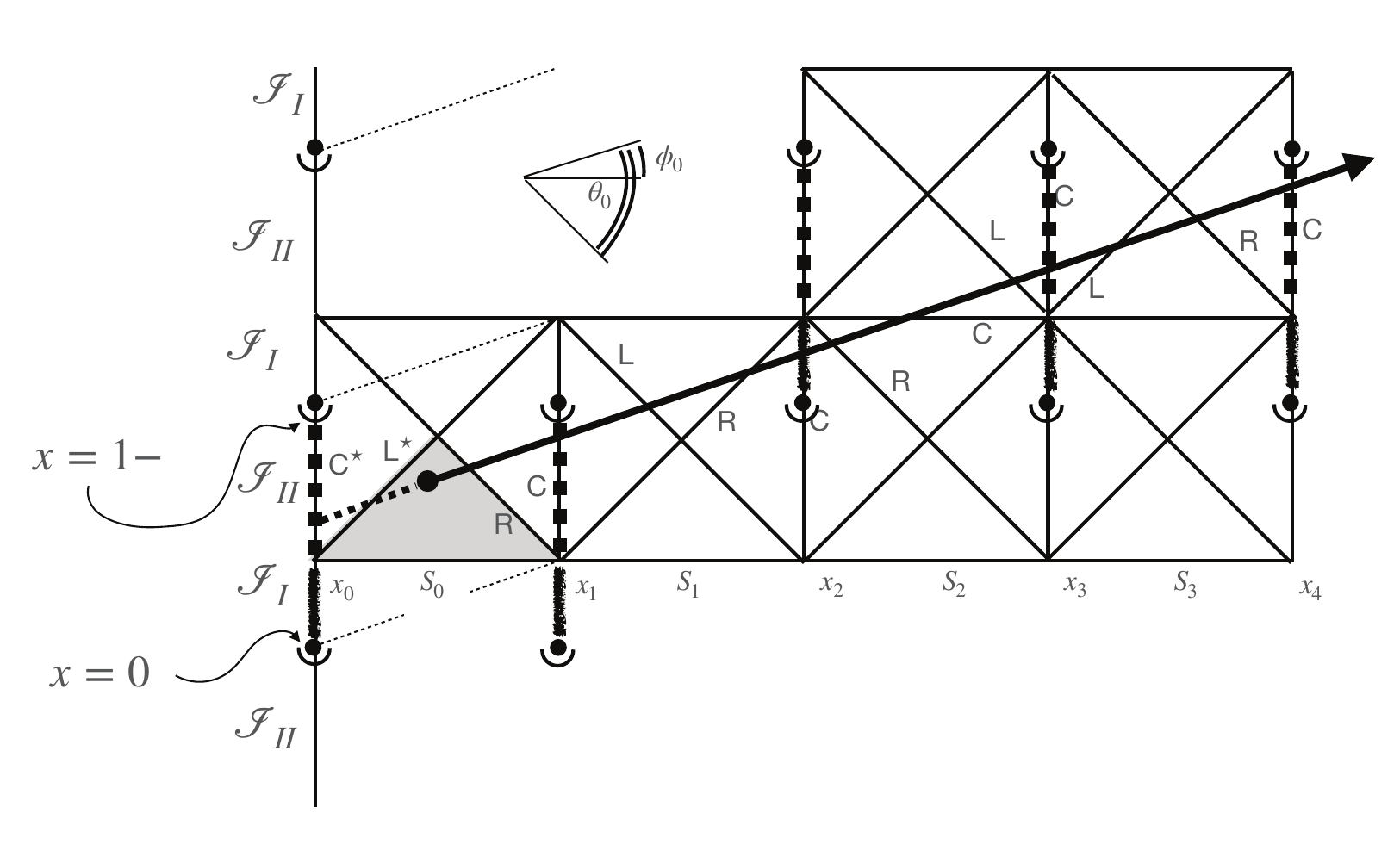}
\end{center}
\caption
{
The relationship between the dynamics in a $45^{\circ}\!\!:\!45^{\circ}\!\!:\!90^{\circ}$ billiards and 
irrational rotations. The grey triangle corresponds to the billiards in question. The white triangles are the images of the original billiards 
obtained via mirror reflections with respect to the wall that the particle hits. The subsequent fragment of the particle's trajectory is reflected as well; as a result, the trajectory becomes a straight line traveling through 
a plane tiled by the copies of the original billiards. 
The large black circle corresponds to the initial position of the particle. The vertical dotted and ``hand-drawn'' lines correspond to two intervals in the related irrational rotation model. $x_{j}$
is the phase space variable in the irrational rotations. See the text for the definition of the counter $S_{j}$. The time $j$ counts the vertical $C$-lines crossed by the trajectory.  
}
\label{f:ZAI}
\end{figure}

Let us introduce another
temporal index,
\begin{align*}
&
\bar{j}(m) :=
\begin{minipage}[t]{0.6\textwidth}
\centering
 \# $CLRC$, $CRLC$, and $CRCLC$ fragments between the $C^{\star}$-event and the $m$'th interval.
\end{minipage}
\end{align*}
Note that $\bar{j}(m)$ counts the number of ``vertical'' $C$-events between $C^{\star}$, exclusive, and the $m$'th interval (see Fig.~\ref{f:ZAI}). 

Accordingly, introduce
\begin{align*}
m'(j) := \inf\left\{ m : \bar{j}(m) = j  \right\}
\,\,.
\end{align*}
Observe now that the counter $Q_{m}$ is a function of $\bar{j}(m)$ alone. Finally, introduce the counter 
\begin{align*}
S_{j} := Q_{m'(j)} 
\,\,.
\end{align*}
Again, by construction, 
\begin{align*}
|S_{j}| = |K_{i'(m'(j))}| 
\,\,,
\end{align*}
and a growth of the artificially constructed counter $S_{j}$ would indicate a growth of the physical counter of interest, $K_{i}$. 

Recall that $j$ labels the temporal intervals between successive ``vertical'' $C$ events. Let it also label the left ``vertical'' $C$ event for a given $j$'th interval. 
Without loss of generality assume that the hypothenuse $C$ has unit length:
\begin{align*}
\text{length}(C) = 1
\,\,.
\end{align*}
Introduce 
\begin{align*}
x_{j} := \text{position of the \maxim{}{``vertical''} particle-wall \maxim{$C$}{} collision}
\,\,,
\end{align*}
in \maxim{turns}{terms} of the tiling depicted in Fig.~\ref{f:ZAI}. 
\maxim{}{Notice that---for notational convenience---the origin $x = 0$ is a distance $\tan(\phi_{0})$ away from the top corner.}
It is easy to see that $x_{j}$ undergoes a sequence of irrational rotations, with a shift 
\begin{align*}
\sigma = \tan(\phi_{0})
\,\,,
\end{align*}
where
\begin{align*}
&
\phi_{0} := \theta_{0} - \frac{\pi}{4}
\\
&
0 \le \phi_{0} < \frac{\pi}{4} 
\,\,.
\end{align*}

Now, divide each of the ``vertical'' $C$-walls onto two areas
\begin{align*}
&
\mathcal{I}_{\text{I}} := [0,\,\sigma[
\\
&
\hspace{1em}\text{and}
\\
&
\mathcal{I}_{\text{II}} := [\sigma,\,1[
\,\,.
\end{align*}
Observe that if $x_{j}$ gets to the $\mathcal{I}_{\text{I}}$ area, a $CRCLC$ fragment will follow and the counter $S$ will change by $\pm 2$, with \emph{the same} sign as the preceding increment. 
(Recall that \eqref{alpha_tilde_fixed}-\eqref{theta0_fixed} dictate that $CLCRC$ fragments are impossible, and that each  $CRCLC$ fragment is isolated, i.e.\ it is surrounded 
by either $CLC$ or $CRC$ fragments.) Likewise, when $x_{j}$ is in  $\mathcal{I}_{\text{II}}$, a $CLC$ or $CRC$ fragment follows, and the counter $S$ changes by $\pm 1$, 
\emph{reversing} the sign of the previous change. All in all, it is easy to show that the counter $S_{j}$ follows the dynamics described by the equations \eqref{x}, \eqref{f_function}, \eqref{f}, \eqref{epsilon}, \eqref{eta_function}, and \eqref{S}.

\section{The method of study: auxiliary rational rotations}
\label{sec:method}
Let us return back to the end of the Section~\ref{sec:object_of_study}.
For a given irrational $\sigma: 0<\sigma<1$, consider its continued fraction expansion
\begin{align*}
\sigma = [0;\,a_{1},\,a_{2},\,\ldots] ~\mbox{,}
\end{align*}
and the corresponding convergents (rational approximants),
\begin{align*}
\sigma_{n} := [0;\,a_{1},\,a_{2},\,\ldots,\,a_{n}] = \frac{p_{n}}{q_{n}} ~\mbox{, } n \in \mathbb{N} ~\mbox{,}
\end{align*}
where $p_{n}$ and $q_{n}$ are mutually prime.

For $n  \in \mathbb{N}$, consider an auxiliary trajectory $(x^{(\sigma_{n})}_{j})_{j \in\mathbb{N}_0}$. Observe that this trajectory is $q_n$-periodic, i.e.
\begin{align} \label{x_periodicity}
x^{(\sigma_{n})}_{j+q_{n}} = x^{(\sigma_{n})}_{j} ~\mbox{, for all } j \in \mathbb{N}_0 ~\mbox{,}
\end{align}
whence so is the sequence of observables $(f^{(\sigma_{n},\,\sigma_{n})}_{j})_{j \in \mathbb{N}_0}$. The definition in (\ref{epsilon}) thus implies that $(\epsilon_{j}^{(\sigma_{n},\sigma_{n})})_{j \in \mathbb{N}_0}$ is either periodic or anti-periodic with period
$q_{n}$,
\begin{align*}
\epsilon^{(\sigma_{n},\,\sigma_{n})}_{j+q_{n}} = 
\eta^{(\sigma_{n},\,\sigma_{n})}_{n}
\epsilon^{(\sigma_{n},\,\sigma_{n})}_{j}
\,\,,
\end{align*}
where 
\begin{align*}
\eta^{(\alpha,\,\beta)}_{n} := \sign(\epsilon^{(\alpha,\,\beta)}_{q_{n}-1}) = \prod_{j=0}^{q_{n}-1} f^{(\alpha,\,\beta)}_{j}
\,\,.
\end{align*}

As a result, the sequence of corresponding counters either grows indefinitely, period after period, or remains trapped around zero:
\newlength{\alength}
\setlength{\alength}{\widthof{even}}
\begin{align*}
S^{(\sigma_{n},\,\sigma_{n})}_{N q_{n}} = 
\left\{
\begin{array}{ccc}
\hspace{5.2em} N\hspace{4.55em}\;, & \text{for} & \eta^{(\sigma_{n},\,\sigma_{n})}_{n} = +1
\\
\left\{
\begin{array}{ccl}
0\;, & \text{for} & N=\text{even}
\\
1\;, & \text{for} & N=\text{odd}
\end{array}
\right\}\;,
& \text{for} & \eta^{(\sigma_{n},\,\sigma_{n})}_{n} = -1 
\end{array}
\right\}
\cdot S^{(\sigma_{n},\,\sigma_{n})}_{q_{n}} 
\,\,,
\end{align*}
for any $N\in \mathbb{N}$. In particular, if $ \eta^{(\sigma_{n},\,\sigma_{n})}_{n} = +1$ and $S^{(\sigma_{n},\,\sigma_{n})}_{q_{n}} \neq 0$, the counter 
$S^{(\sigma_{n},\,\sigma_{n})}_{j}$ is obviously unbounded, since it is unbounded on the subsequence with $j = N q_{n}$, where it grows as
\begin{align}
S^{(\sigma_{n},\,\sigma_{n})}_{N q_{n}} \quad  \stackrel{\eta^{(\sigma_{n},\,\sigma_{n})}_{n} = +1}{=} \quad N S^{(\sigma_{n},\,\sigma_{n})}_{q_{n}}
\,\,.
\label{counter_growth_01}
\end{align}

This observation is the cornerstone of the proof of unboundedness of $S^{(\sigma,\,\sigma)}_{j}$ for appropriately chosen irrational rotations $\sigma$ and initial conditions $x_0$, which will be constructed in Section~\ref{sec:main} (Theorem \ref{thm_main}). 

\section{Growth for (rational) auxiliary rotations}
Fix $n  \in \mathbb{N}$ and consider the $q_n$-periodic auxiliary trajectory $(x^{(\sigma_{n})}_{j})_{j \in\mathbb{N}_0}$. The goal of this section is to explore the structure of the trajectory to quantify the growth of the counters $S^{(\sigma,\,\sigma)}_{j}$ over one period $0 \leq j \leq q_n-1$. Specifically, we will establish the following:
\begin{proposition} \label{lem:Sge2}
Suppose that {\emph{both}} $p_n$ and $q_n$ are odd. Then, one has the lower bound
\begin{align} \label{Sge2}
\left|S^{(\sigma_{n},\,\sigma_{n})}_{q_{n}}\right|\ge 2 ~\mbox{.}
\end{align}
\end{proposition}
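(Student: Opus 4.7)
The plan is to turn $S^{(\sigma_n,\,\sigma_n)}_{q_n}$ into an expression whose nonvanishing is visible from parity, and then use that $S^{(\sigma_n,\,\sigma_n)}_{q_n}$ is automatically an even integer to upgrade ``nonzero'' into ``at least $2$ in absolute value.'' To this end, I would introduce the running products
\[
P_j := \prod_{j'=0}^{j} f^{(\sigma_n,\,\sigma_n)}_{j'} \in \{-1,+1\}, \qquad P_{-1} := 1,
\]
so that $\epsilon^{(\sigma_n,\,\sigma_n)}_j = \chi(f^{(\sigma_n,\,\sigma_n)}_j)\, P_j$, and, from $f_j^2 = 1$, $P_{j-1} = f^{(\sigma_n,\,\sigma_n)}_j P_j$. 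Rewriting the weight as $\chi(f) = (3+f)/2$ then yields the identity $\epsilon^{(\sigma_n,\,\sigma_n)}_j = (3 P_j + P_{j-1})/2$, which is the single nonroutine step of the argument.

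Summing this identity for $j = 0, \ldots, q_n - 1$ and reindexing the $P_{j-1}$ sum by $k = j-1$ produces a telescoping that collapses to
\[
S^{(\sigma_n,\,\sigma_n)}_{q_n} \;=\; 2 \sum_{j=0}^{q_n - 1} P_j \;+\; \frac{P_{-1} - P_{q_n-1}}{2} \;=\; 2 \sum_{j=0}^{q_n - 1} P_j \;+\; \frac{1 - \eta^{(\sigma_n,\,\sigma_n)}_n}{2}.
\]
The boundary correction is where the parity hypothesis enters. Since $\sigma_n q_n = p_n$ is an integer, the equispaced set $\{x_0 + k/q_n \bmod 1 : 0 \leq k < q_n\}$ traced out by the $q_n$-periodic orbit intersects $[0, \sigma_n)$ in exactly $p_n$ points for every $x_0$ (the count is independent of $x_0$ because shifting by $1/q_n$ simultaneously pushes one point out of the arc on the right and another in on the left). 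Hence the period contains $p_n$ indices with $f^{(\sigma_n,\,\sigma_n)}_j = +1$ and $q_n - p_n$ with $f^{(\sigma_n,\,\sigma_n)}_j = -1$, so $\eta^{(\sigma_n,\,\sigma_n)}_n = (-1)^{q_n - p_n}$; when both $p_n$ and $q_n$ are odd, $q_n - p_n$ is even and $\eta^{(\sigma_n,\,\sigma_n)}_n = +1$.

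Plugging in, $S^{(\sigma_n,\,\sigma_n)}_{q_n} = 2 \sum_{j=0}^{q_n - 1} P_j$, an even integer. The inner sum has $q_n$ terms, each equal to $\pm 1$, so its parity coincides with the parity of $q_n$; since $q_n$ is odd, the sum is an odd integer, in particular nonzero. Therefore $\bigl|S^{(\sigma_n,\,\sigma_n)}_{q_n}\bigr| \geq 2$.

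I do not anticipate any genuine obstacle here. The only real design choice is recognizing the form $\epsilon_j = (3P_j + P_{j-1})/2$ of the increment, which cleanly ties the weight $\chi$ to the product structure of $P_j$; once that is in hand, the vanishing of the boundary term and the odd parity of the main sum both follow immediately from the hypothesis ``$p_n, q_n$ both odd,'' and the rest is pure bookkeeping.
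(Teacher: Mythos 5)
Your proof is correct. It reaches the same conclusion as the paper's argument but packages it differently: the paper splits the sum $S^{(\sigma_n,\sigma_n)}_{q_n}$ according to the value of $f^{(\sigma_n,\sigma_n)}_j$, observes that the $q_n-p_n$ increments with $f_j=-1$ alternate in sign and hence cancel in pairs (using that $q_n-p_n$ is even), and that the remaining $p_n$ increments are each $\pm 2$ with $p_n$ odd, forcing a nonzero even total. You instead encode the increment by the closed-form identity $\epsilon_j=(3P_j+P_{j-1})/2$ with $P_j=\prod_{j'\le j}f_{j'}$, telescope to get the exact formula $S^{(\sigma_n,\sigma_n)}_{q_n}=2\sum_{j=0}^{q_n-1}P_j+\tfrac{1-\eta_n}{2}$ (valid with no parity assumption), kill the boundary term via $\eta_n=(-1)^{q_n-p_n}=+1$, and conclude from $q_n$ odd that $\sum_j P_j$ is odd, hence nonzero. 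The two routes use the hypotheses in complementary ways --- the paper leans on $p_n$ odd for the main term and you lean on $q_n$ odd, which are equivalent given $q_n-p_n$ even --- and your unconditional identity makes the role of each hypothesis especially transparent; the paper's version has the virtue of identifying the surviving contribution concretely as twice a signed count over the $f_j=+1$ visits. Your supporting claim that exactly $p_n$ of the $q_n$ equispaced orbit points fall in $[0,\sigma_n)$ is the same fact the paper establishes via the representation $x_j^{(\sigma_n)}=x_{0,n}+k_{j,n}/q_n$, so no gap there.
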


To prove Proposition \ref{lem:Sge2}, we start by recalling some basic facts about rational rotations. First observe that since $\mathrm{gcd}(p_n,q_n) = 1$, the finite trajectory $\{x_j^{(\sigma_n)} ~,~ 0 \leq j \leq q_n-1\}$ consists of $q_n$ {\em{distinct}} points, each of which is visited only once. Moreover, identifying $\mathbb{R} / \mathbb{Z}$ with the unit circle $S^1 \subseteq \mathbb{C}$ via the bijection $t \mapsto \mathrm{e}^{2 \pi i t}$ and noticing that
\begin{equation}
\left( \mathrm{e}^{2 \pi i (x_0 - x_j^{(\sigma_n)}) } \right)^{q_n} = 1 
\end{equation}
shows that the points $\mathrm{e}^{2 \pi i (x_0 - x_j^{(\sigma_n)})}$ are merely permutations of the $q_n$-th roots of unity, i.e.
\begin{equation}
\{x_j^{(\sigma_n)} ~\mbox{, } 0 \leq j \leq q_n -1 \} = \{ x_0 + \frac{k}{q_n} ~\mbox{, } 0 \leq k \leq q_n - 1 \} ~\mbox{.}
\end{equation}

In particular, writing
\begin{equation} \label{k_0n}
x_0 = \left( x_0 \mod \frac{1}{q_n} \right) + \frac{\lfloor x_0 q_n \rfloor}{q_n} =: x_{0,n} + k_{0,n} \frac{1}{q_n} ~\mbox{,}
\end{equation}
we may represent the elements of the finite trajectory $\{x_j^{(\sigma_n)} ~,~ 0 \leq j \leq q_n-1\}$ in the form
\begin{align} \label{eq_represent_auxiltraj}
x_j^{(\sigma_n)} & = x_{0,n} + \frac{k_{j,n}}{q_{n}} ~\mbox{, } 0 \leq j \leq q_n -1 ~\mbox{,} 
\end{align}
with
\begin{equation} \label{k_j_n}
k_{j,n} := ( k_{0,n} + j p_{n} ) \mod q_{n} ~\mbox{.}
\end{equation}
The main merit of the representation of the auxiliary trajectory in (\ref{k_0n})-(\ref{k_j_n}) is that it allows to keep track of the value of the observable since
\begin{align} \label{eq_observable_k_j_n}
f^{(\sigma_{n},\,\sigma_{n})}_{j} = 
\left\{
\begin{array}{ccl}
+1\;, & \text{for} & k_{j,n} = 0,\,1,\,\ldots,\, p_{n}-1
\\
-1\;, & \text{for} & k_{j,n} = p_{n},\,\ldots,\,q_{n}-1
\end{array}
\right.
\,\,.
\end{align}

In particular, (\ref{eq_observable_k_j_n}) immediately yields
\begin{lemma} \label{lem:etaEqP1}
\begin{align*}
\eta^{(\sigma_{n},\,\sigma_{n})}_{n} = +1 \text{\rm\,\,  , for\,\,  } q_{n}-p_{n} = \text{\rm even} ~\mbox{.}
\end{align*}
\end{lemma}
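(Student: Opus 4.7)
\smallskip

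\noindent\textbf{Proof proposal for Lemma \ref{lem:etaEqP1}.}
The plan is to exploit the permutation structure of the finite auxiliary trajectory to reduce $\eta^{(\sigma_{n},\sigma_{n})}_{n}$ to a simple parity count. Recall from the definition that
\begin{equation*}
\eta^{(\sigma_{n},\sigma_{n})}_{n} = \prod_{j=0}^{q_{n}-1} f^{(\sigma_{n},\sigma_{n})}_{j} ~\mbox{,}
\end{equation*}
so everything will come down to counting how many factors in this product are $+1$ and how many are $-1$.

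First, I would invoke the fact, established just before the lemma, that because $\gcd(p_{n},q_{n})=1$, the map $j \mapsto k_{j,n} = (k_{0,n}+jp_{n}) \mod q_{n}$ is a bijection of $\{0,1,\ldots,q_{n}-1\}$ onto itself. Consequently, as $j$ ranges over one period $0 \leq j \leq q_{n}-1$, the index $k_{j,n}$ hits each residue in $\{0,1,\ldots,q_{n}-1\}$ exactly once. Combined with the dichotomy (\ref{eq_observable_k_j_n}), this immediately gives that exactly $p_{n}$ of the values $f^{(\sigma_{n},\sigma_{n})}_{j}$ equal $+1$ (corresponding to $k_{j,n} \in \{0,\ldots,p_{n}-1\}$) and exactly $q_{n}-p_{n}$ of them equal $-1$ (corresponding to $k_{j,n} \in \{p_{n},\ldots,q_{n}-1\}$).

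Substituting this into the product yields
\begin{equation*}
\eta^{(\sigma_{n},\sigma_{n})}_{n} = (+1)^{p_{n}} \cdot (-1)^{q_{n}-p_{n}} = (-1)^{q_{n}-p_{n}} ~\mbox{,}
\end{equation*}
which equals $+1$ precisely when $q_{n}-p_{n}$ is even, as claimed. The argument is a short counting observation rather than a genuine obstacle; the only thing to be careful about is to cite the bijectivity of $j \mapsto k_{j,n}$ cleanly and to observe that the value of $k_{0,n}$, and hence of the initial condition $x_{0}$, plays no role in this particular parity calculation.
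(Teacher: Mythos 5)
Your proof is correct and follows essentially the same route as the paper: both use that $j \mapsto k_{j,n}$ is a bijection of $\{0,\ldots,q_{n}-1\}$ so that the product $\eta^{(\sigma_{n},\sigma_{n})}_{n}$ contains exactly $p_{n}$ factors $+1$ and $q_{n}-p_{n}$ factors $-1$, hence equals $(-1)^{q_{n}-p_{n}}$. No gaps; your explicit remark that $k_{0,n}$ (and hence $x_{0}$) is irrelevant to the parity count is a small clarifying addition over the paper's version.
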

\begin{proof}
For $0 \leq j \leq q_n-1$, $k_{j,n}$ will visit each of the $q_{n}$ points $0,\,1,\,\ldots,\,q_{n}-1$ precisely once. Therefore, (\ref{eq_observable_k_j_n}) implies that
\begin{align*}
\eta^{(\sigma_{n},\,\sigma_{n})}_{n} = \prod_{j=0}^{q_{n}-1} f^{(\sigma_{n},\,\sigma_{n})}_{j}
\end{align*}
is a product of $p_{n}$ factors $+1$ and $q_{n}-p_{n}$ factors $-1$. Thus, if $q_{n}-p_{n}$ is even, we conclude that $\eta^{(\sigma_{n},\,\sigma_{n})}_{n} = +1$.
\end{proof}

We are now ready to prove the main result of this section:
\begin{proof}[Proof of Proposition \ref{lem:Sge2}]
According to (\ref{eq_observable_k_j_n}), we have
\begin{align} \label{eq_epsilon_keylemma}
\epsilon^{(\sigma_{n},\,\sigma_{n})}_{j} = 
\left\{
\begin{array}{ccl}
+2\;, & \text{for} & k_{j,n} = 0,\,1\,\ldots,\, p_{n}-1
\\
-1\;, & \text{for} & k_{j,n} = p_{n},\,p_{n},\,\ldots,\,q_{n}-1
\end{array}
\right\}
\,
\cdot \sign(\epsilon^{(\sigma_{n},\,\sigma_{n})}_{j-1})
\,\,.
\end{align}

Using the fact that, for $0 \leq j \leq q_n-1$, $k_{j,n}$ will visit each of the $q_{n}$ points $0,\,1,\,\ldots,\,q_{n}-1$ precisely once, we may consider the subsequence $j_l$ consisting of all instances $j$ for which $f^{(\sigma_{n},\,\sigma_{n})}_{j} = -1$; in particular, along this subsequence one has 
\begin{align*}
\epsilon^{(\sigma_{n},\,\sigma_{n})}_{j_{l}} = -\epsilon^{(\sigma_{n},\,\sigma_{n})}_{j_{l-1}} 
\,\,.
\end{align*}

The sequence $\epsilon^{(\sigma_{n},\,\sigma_{n})}_{j_{l}} $ has an even number of terms (i.e.~$q_{n}-p_{n}$), and thus does not contribute to the sum $S^{(\sigma_{n},\,\sigma_{n})}_{q_{n}} = \sum_{j'=0}^{q_{n}-1} \epsilon^{(\sigma_{n},\,\sigma_{n})}_{j}$. By (\ref{eq_epsilon_keylemma}), the remaining summands are all $\pm 2$, and there is an odd number of them (i.e.~$p_{n}$). In summary, we conclude $\vert S^{(\sigma_{n},\,\sigma_{n})}_{q_{n}} \vert \ge 2$, as claimed.
\end{proof}

\section{Constructing an unbounded subsequence of the counter trajectory, for an irrational rotation} \label{sec:main}

We are now in a position to formulate and prove our main result:
\begin{theorem} \label{thm_main}
Let $\sigma=[0; a_1, a_2, \dots] \in (0,1)$ be irrational such that its continued fraction expansion has the following properties: there exists a subsequence $(n_m)_{m \in \mathbb{N}}$ of $2\mathbb{N}$ such that 
\begin{itemize}
\item[(a)] $(a_{n_m})_{m \in \mathbb{N}}$ is unbounded
\item[(b)] for all $m \in \mathbb{N}$, both $p_{n_m}$ and $q_{n_m}$ are odd ( and thus the conditions of the Lemma ~\ref{lem:Sge2} are satisfied, for $n=n_{m}$);
\end{itemize}

Then there exists a full measure set of initial conditions $\Omega(\sigma) \subseteq [0,1)$ such that for each $x_0 \in \Omega(\sigma)$, one has
\begin{equation} \label{eq_mainthm_unbbcounter}
\limsup_{j \to \infty} \left|S_{j}\right| = \infty ~\mbox{.}
\end{equation}
\end{theorem}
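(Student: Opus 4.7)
The plan is to bootstrap the linear-in-$N$ growth of the auxiliary rational counter (equation (\ref{counter_growth_01}) combined with Proposition~\ref{lem:Sge2}) to the irrational counter $S^{(\sigma,\sigma)}_j$, by coupling the irrational and rational trajectories over time windows $[0, N_m q_{n_m})$ with $N_m \to \infty$ along a full-measure set of initial conditions.

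Fix $m$ and write $n = n_m$, $q = q_n$, $p = p_n$. Since conditions (a)--(b) force $q - p$ to be even, Lemma~\ref{lem:etaEqP1} yields $\eta^{(\sigma_n,\sigma_n)}_n = +1$, so (\ref{counter_growth_01}) combined with Proposition~\ref{lem:Sge2} gives
$$
\bigl| S^{(\sigma_n, \sigma_n)}_{N q} \bigr| = N \bigl| S^{(\sigma_n, \sigma_n)}_{q} \bigr| \geq 2N, \quad N \in \mathbb{N}.
$$
The task reduces to showing the identity $S^{(\sigma, \sigma)}_{N_m q} = S^{(\sigma_n, \sigma_n)}_{N_m q}$ for some sequence $N_m \to \infty$, since (\ref{epsilon}) makes this identity equivalent to pointwise agreement of the observable sequences $(f^{(\sigma, \sigma)}_j)$ and $(f^{(\sigma_n, \sigma_n)}_j)$ on $[0, N_m q)$ — any single disagreement flips the running product $\prod f_{j'}$ and corrupts all subsequent $\epsilon$'s.

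The coupling rests on a geometric estimate. The observables at step $j$ can differ only if $x_j^{(\sigma)}$ or $x_j^{(\sigma_n)}$ falls in a neighborhood of $\{0, \sigma\} \cup \{0, \sigma_n\}$ of width bounded by $|\sigma - \sigma_n| + |x_j^{(\sigma)} - x_j^{(\sigma_n)}| \leq (1 + j)/(q\, q_{n+1})$. By the representation (\ref{eq_represent_auxiltraj})--(\ref{k_j_n}), the auxiliary trajectory's minimum circular distance to $\{0, \sigma_n\}$ equals $\min(\theta_n, 1-\theta_n)/q$ with $\theta_n := (x_0 q) \bmod 1$. Hence observable agreement holds throughout $[0, Nq)$ provided
$$
\min(\theta_n, 1 - \theta_n) > (Nq + 1)/q_{n+1}.
$$

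The full-measure set $\Omega(\sigma)$ is assembled via Borel--Cantelli: the set of $x_0$ with $\min(\theta_{n_m}, 1-\theta_{n_m}) < c_m$ has Lebesgue measure at most $2c_m$, so any summable choice $\sum_m c_m < \infty$ produces a full-measure set on which $\min(\theta_{n_m}, 1-\theta_{n_m}) \geq c_m$ for all sufficiently large $m$. For such $x_0$, setting $N_m := \lfloor c_m q_{n_m+1}/q_{n_m} \rfloor$ and applying the above estimate gives $|S^{(\sigma,\sigma)}_{N_m q_{n_m}}| \geq 2N_m$, and the conclusion (\ref{eq_mainthm_unbbcounter}) follows provided $N_m \to \infty$. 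The main obstacle is the quantitative matching: $c_m$ must be summable yet $c_m\, q_{n_m+1}/q_{n_m}$ must diverge, which requires extracting — using condition (a) and the recursion $q_n = a_n q_{n-1} + q_{n-2}$ — a sub-subsequence of $(n_m)$ along which the ratio $q_{n_m+1}/q_{n_m}$ grows fast enough to absorb $c_m^{-1}$, after which the $\limsup$ conclusion is immediate.
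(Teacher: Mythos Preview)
Your proposal is correct and follows the same overall architecture as the paper's proof: couple $(x_j^{(\sigma)})$ to the rational auxiliary trajectory $(x_j^{(\sigma_{n_m})})$ over a window of length $N_m q_{n_m}$, verify that the observables $f_j$ agree throughout that window, and then invoke (\ref{counter_growth_01}) together with Proposition~\ref{lem:Sge2} to get $|S_{N_m q_{n_m}}| \geq 2 N_m$.

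The tactical choices differ in two places, and the comparison is instructive. First, to build the full-measure set $\Omega(\sigma)$ you use Borel--Cantelli with a summable sequence $(c_m)$ controlling $\min(\theta_{n_m}, 1-\theta_{n_m})$ from \emph{both} sides, whereas the paper needs only the \emph{one-sided} bound $\{x_0 q_{n_m}\} < 1 - 1/Q$ for some fixed $Q$, and obtains full measure by noting that $\{x_0 : \|x_0 q_{n_m}\| \to 0\}$ is a proper subgroup of $\mathbb{R}/\mathbb{Z}$, hence null. The reason the paper can drop the lower bound on $\theta_{n_m}$ is the hypothesis $n_m \in 2\mathbb{N}$: this forces $\sigma_{n_m} < \sigma$, so the drift $x_j^{(\sigma)} - x_j^{(\sigma_{n_m})}$ is \emph{nonnegative}, which automatically prevents the irrational trajectory from slipping below $0$ or below $\sigma$ whenever the rational one has not --- see (\ref{main_inequalities_2}) and (\ref{fM1}). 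Your two-sided argument does not use evenness of $n_m$ at all, so in that sense it is more robust; the price is that you must then balance the summability of $c_m$ against the divergence of $c_m\, q_{n_m+1}/q_{n_m}$, which you correctly flag as the remaining task. The paper sidesteps this matching entirely: with a fixed $Q$ one simply takes $N_m = \lfloor a_{n_m+1}/Q \rfloor$ and appeals directly to unboundedness of the partial quotients. Both routes land on the same quantitative endpoint $|S_{j_m}| \geq 2 \lfloor a_{n_m+1}/Q \rfloor$, and both ultimately rely on the growth of $q_{n_m+1}/q_{n_m} \sim a_{n_m+1}$ rather than $a_{n_m}$ itself.
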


\begin{remark} \label{remark_fullmeasinit}
Our proof of Theorem \ref{thm_main} implies the following explicit description of the set $\Omega(\sigma)$:
\begin{equation} \label{eq_initialcond_set}
\Omega(\sigma) := \bigcup_{Q \in \mathbb{N}} \left\{ x_0 \in [0,1) ~:~ \{x_0 q_{n_m} \} < 1 - \frac{1}{Q} ~\mbox{, for infinitely many $m \in \mathbb{N}$ } \right\} ~\mbox{.}
\end{equation}
Here $\{x\}: = x - \lfloor x \rfloor$ denotes the fractional part of $x \in [0,1)$. 

Observe that $\Omega(\sigma)$ is a set of {\em{full}} Lebesgue measure in $[0,1)$. Indeed, considering its complement
\begin{align}
[0,1) \setminus \Omega(\sigma) & = \{ x_0 \in [0,1) ~:~  \{x_0 q_{n_m} \} \to 1\} \\
   & \subseteq \left\{ x_0 \in [0,1) ~:~ \vert\vert\vert x_0 q_{n_m} \vert\vert\vert \to 0 \right\} \label{eq_fullmeasurecond} ~\mbox{,}
\end{align}
where $\vert\vert\vert x \vert\vert\vert:= \inf_{n \in \mathbb{Z}} \vert x -n \vert$ is the usual norm in $\mathbb{R} / \mathbb{Z}$, shows that the set on the right-hand side of (\ref{eq_fullmeasurecond}) is a proper subgroup of $\mathbb{R} / \mathbb{Z}$. Thus, a well known fact from harmonic analysis (see e.g. problem~14 in Sec.~1 of Katznelson's book \cite{Katznelson_book2004}) implies that (\ref{eq_fullmeasurecond}), and hence also $[0,1) \setminus \Omega(\sigma)$, has zero Lebesgue measure.
\end{remark}

Before turning to the proof of Theorem \ref{thm_main}, we comment on the existence of irrationals $\sigma$ described in that theorem. To construct explicit examples of such $\sigma$, we first recall that the continued fraction expansion of $\sigma=[0; a_1, a_2, \dots] \in (0,1)$ satisfies the recursion relations (see e.g. \cite{Khinchin_book_1997}):
\begin{align} 
p_{n} & = a_{n} p_{n-1} + p_{n-2} ~\mbox{,} \label{eq_recursion_p} \\
q_{n} & = a_{n} q_{n-1} + q_{n-2} ~\mbox{, for all $n \in \mathbb{N}$ ,} \label{eq_recursion_q}
\end{align}
with initial conditions
\begin{align} \label{eq_recursion_IC}
p_0 = 0 ~\mbox{, } p_{-1} = 1 ~\mbox{,} \nonumber \\
q_0 = 1 ~\mbox{, } p_{-1} = 0 ~\mbox{.}
\end{align}

Suppose that $(a_n)$ is a sequence in $(2 \mathbb{N} -1)$. Using induction, the recursion relations (\ref{eq_recursion_p})-(\ref{eq_recursion_q}) together with the initial conditions (\ref{eq_recursion_IC}) imply that
\begin{equation}
p_n = \begin{cases} \text{odd}\;, & ~\text{if $n \equiv 1,2 \mod 3$\,,} \\  \text{even} & ~\text{if $n \equiv \;0 \mod 3$\,,}  \end{cases}
\end{equation}
and
\begin{equation}
q_n = \begin{cases} \text{odd}\;, & ~\text{if $n \equiv 0,1 \mod 3$\,,} \\  \text{even}\;, & ~\text{if $n \equiv \;2 \mod 3$\,.}  \end{cases}
\end{equation}
The conditions of Theorem \ref{thm_main} are thus satisfied for $(n \equiv 1 \mod 3)$, specifically by letting
\begin{equation}
n_m = 3 (2 m -1) + 1 ~\mbox{, $m \in \mathbb{N}$ .} 
\end{equation}
In summary, we have shown that all irrationals $\sigma=[0; a_1, a_2, \dots] \in (0,1)$ for which the sequence of elements $(a_n)$ has {\em{odd parity}} satisfy the hypotheses of Theorem \ref{thm_main}. 

\begin{proof}[Proof of Theorem \ref{thm_main}]
Let $\sigma \in (0,1)$. Observe that since both conditions (a) and (b) are assumed to only hold along some {\em{sub}}sequence $(n_m)_{m \in \mathbb{N}}$ of $2\mathbb{N}$, possibly passing to a sub-subsequence one may replace hypothesis (a) with
\begin{equation}
\lim_{m \to \infty} a_{n_m} = \infty ~\mbox{.}
\end{equation}

Fix an initial condition $x_0 \in \Omega(\sigma)$, where $\Omega(\sigma)$ is described in (\ref{eq_initialcond_set}); in particular, there exists a $Q \in \mathbb{N}$ such that 
\begin{equation}
\{x_0 q_{n_m} \} < 1 - \frac{1}{Q} ~\mbox{, for infinitely many $m \in \mathbb{N}$ .} 
\end{equation}
Again, possibly passing to an appropriate subsequence, we may simply assume that 
\begin{equation}
\{x_0 q_{n_m} \} < 1 - \frac{1}{Q} ~\mbox{, for all $m \in \mathbb{N}$ .} 
\end{equation}

We recall two useful properties of continued fractions (see e.g. \cite{Khinchin_book_1997}): For all $n \in \mathbb{N}$, one has
\begin{equation} \label{eq_rateof conv_contfract}
\vert \sigma_n - \sigma \vert < \dfrac{1}{q_n q_{n+1}} ~\mbox{;}
\end{equation}
morevoer, since by hypothesis $n_m \in 2 \mathbb{N}$, one has
\begin{equation} \label{eq_evenconv}
\sigma_{n_m} < \sigma ~\mbox{, for all $m \in \mathbb{N}$ .}
\end{equation}

Now fix $m \in \mathbb{N}$, and let
\begin{align} \label{eq_subsequence_Q}
j^{(Q)}_{m} := \lfloor \frac{a_{n_{m}+1}}{Q} \rfloor q_{n_{m}} ~\mbox{.}
\end{align}

Then, for $1 \leq j < j^{(Q)}_{m}$, using (\ref{eq_rateof conv_contfract}), (\ref{eq_subsequence_Q}), and the recursion relation in (\ref{eq_recursion_q}),  we make the estimate
\begin{align} \label{eq_distancetraj}
\vert x_j^{(\sigma)} - x_j^{(\sigma_{n_m})} \vert < j^{(Q)}_{m} \vert \sigma - \sigma_m \vert \leq \dfrac{1}{Q q_{n_m}} ~\mbox{.}
\end{align}

Moreover, notice that the representation of the elements of the rational auxiliary trajectory in (\ref{k_0n})-(\ref{eq_represent_auxiltraj}) implies
\begin{align}
x_j^{(\sigma_{n_m})} & = \left( x_0 \mod \frac{1}{q_{n_m}} \right) +  \frac{k_{j,n_m}}{q_{n_m}} \nonumber \\
    & = \dfrac{ \{ x_0 q_{n_m} \}   }{q_{n_m}} +  \frac{k_{j,n_m}}{q_{n_m}} < \dfrac{ 1 - \frac{1}{Q} }{q_{n_m}} + \frac{k_{j,n_m}}{q_{n_m}} \label{eq_check_1}  \\
    & \leq \dfrac{ 1 - \frac{1}{Q} }{q_{n_m}} + \frac{q_{n_m} - 1}{q_{n_m}} \leq 1 - \dfrac{ 1 }{Q q_{n_m}} ~\mbox{, for } 0 \leq j \leq q_{n_m} - 1 \label{eq_auxiltrajcontrol} ~\mbox{.}
\end{align}

Thus, for $0 \leq j < j^{(Q)}_{m}$, we conclude from (\ref{eq_distancetraj}) and (\ref{eq_auxiltrajcontrol}) that
\begin{align*}
x^{(\sigma)}_{j} 
&
= \left( \underbrace{ j(\sigma - \sigma_{n_{m}}) }_{ < \frac{1}{Q q_{n_{m}} } } 
\quad+\quad  \underbrace{x^{(\sigma_{n_{m}})}_{j}}_{\le 1 - \frac{1}{Q q_{n_{m}}}  } \right) \mod 1
\nonumber
\\ & = j(\sigma - \sigma_{n_m}) +  x_j^{(\sigma_{n_m})}  \nonumber ~\mbox{,}
\end{align*}
whence
\begin{align}
x^{(\sigma)}_{j}  & < x^{(\sigma_{n_{m}})}_{j} + \frac{1}{Q q_{n_{m}}} ~\mbox{, } 0 \leq j < j^{(Q)}_{m} ~\mbox{, } \label{main_inequalities_1}\\
x^{(\sigma)}_{j}  & \ge x^{(\sigma_{n_{m}})}_{j} + (\sigma-\sigma_{n_{m}})  ~\mbox{, } 1 \le j < j^{(Q)}_{m} ~\mbox{.} \label{main_inequalities_2}
\end{align}

Let us now prove that 
\begin{align} \label{f_Eq_f_n}
f^{(\sigma,\,\sigma)}_{j}   = f^{(\sigma_{n_{m}},\,\sigma_{n_{m}})}_{j} ~\mbox{, for  } 0 \leq j <  j_{m}^{(Q)} ~\mbox{.}
\end{align}

Using (\ref{main_inequalities_1}), one has for $j < j_m^{(Q)}$
\begin{align}
\begin{array}{rcl}
f^{(\sigma_{n_{m}},\,\sigma_{n_{m}})}_{j} =  +1  
&
\Leftrightarrow 
&
x^{(\sigma_{n_{m}})}_{j} < \sigma_{n_{m}} ~\mbox{ and } k_{j,n_m} \leq p_{n_m} - 1
\\
&
\Rightarrow
&
x^{(\sigma_{n_{m}})}_{j} +  \frac{1}{Q q_{n_{m}}} \stackrel{(\ref{eq_check_1} )}{\leq} \sigma_{n_{m}} 
\\
&
\stackrel{\text{\eqref{main_inequalities_1}}}{\Rightarrow} 
&
x^{(\sigma)}_{j} \leq \sigma_{n_{m}} \stackrel{ (\ref{eq_evenconv})   }{<} \sigma
\\
&
\Rightarrow 
&
f^{(\sigma,\,\sigma)}_{j} = +1\,.
\end{array}
\,\,
\label{fP1}
\end{align}
Similarly, for $1 \le j<j_{m}^{(Q)}$, (\ref{main_inequalities_2}) yields that
\begin{align}
\begin{array}{rcl}
f^{(\sigma_{n_{m}},\,\sigma_{n_{m}})}_{j} =  -1  
&
\Leftrightarrow
& x^{(\sigma_{n_{m}})}_{j} \ge \sigma_{n_{m}} 
\\
&
\Leftrightarrow
& x^{(\sigma_{n_{m}})}_{j} + (\sigma-\sigma_{n_{m}}) \ge \sigma 
\\
&
\stackrel{\text{\eqref{main_inequalities_2}}}{\Rightarrow} 
&
x^{(\sigma)}_{j} \ge \sigma 
\\
&
\Leftrightarrow
& f^{(\sigma,\,\sigma)}_{j} = -1 \,.
\end{array}
\label{fM1}
\end{align}
Finally, since $x^{(\sigma_{n_{m}})}_{0} = x^{(\sigma)}_{0} = x_{0}$,
\begin{align}
f^{(\sigma_{n_{m}},\,\sigma_{n_{m}})}_{j} \stackrel{j=0}{=} f^{(\sigma,\,\sigma)}_{j} 
\,\,.
\label{f_j0}
\end{align}
In summary, \eqref{fP1}, \eqref{fM1}, and \eqref{f_j0} validate \eqref{f_Eq_f_n}. 

Therefore, for $j < j_{m}^{(Q)}$, any conclusion about the auxiliary sequence $f^{(\sigma_{n_{m}},\,\sigma_{n_{m}})}_{j}$ will be valid for the sequence $f^{(\sigma,\,\sigma)}_{j}$. By the definition of the increments $\epsilon^{(\alpha,\,\beta)}_{j}$ in (\ref{epsilon}), (\ref{f_Eq_f_n}) implies 
\begin{align} \label{f_Eq_epsilon_n}
\epsilon^{(\sigma,\,\sigma)}_{j}   = \epsilon^{(\sigma_{n_{m}},\,\sigma_{n_{m}})}_{j} ~\mbox{, for  } 0 \leq j <  j_{m}^{(Q)} ~\mbox{.}
\end{align}
Finally, since the counter $S^{(\alpha,\,\beta)}_{j}$ is only sensitive to the values of $\epsilon^{(\alpha,\,\beta)}_{j'}$ for $j' :  0 \le j' \le j-1$ (see \eqref{S}), we obtain
\begin{align}
S^{(\sigma,\,\sigma)}_{j} =  S^{(\sigma_{n_{m}},\,\sigma_{n_{m}})}_{j}  \text{  , for  }  j \le j_{m}^{(Q)} 
\,\,.
\label{main_equality}
\end{align}

In particular, the property \eqref{counter_growth_01} will be valid for $S^{(\sigma,\,\sigma)}_{N q_{n_{m}}}$, with $N = \lfloor a_{n_{m}+1} / Q \rfloor$. 
To see this, we combine \eqref{main_equality}, condition (b) of Theorem \ref{thm_main}, and 
Proposition \ref{lem:Sge2} to get 
\begin{align}
\left|S^{(\sigma,\,\sigma)}_{j_{m}^{(Q)}}\right| 
= \left|S^{(\sigma_{n_{m}},\,\sigma_{n_{m}})}_{j_{m}^{(Q)}} \right|
=  \left|S^{(\sigma_{n_{m}},\,\sigma_{n_{m}})}_{\lfloor \frac{a_{n_{m}+1}}{Q} \rfloor q_{n_{m}}}\right|  = \lfloor \frac{a_{n_{m}+1}}{Q} \rfloor \left|S^{(\sigma_{n_{m}},\,\sigma_{n_{m}})}_{q_{n_{m}}}\right|
\geq 2 \lfloor \frac{a_{n_{m}+1}}{Q} \rfloor
\,\,.
\label{growth_Bis}
\end{align}
By hypothesis (a) of Theorem \ref{thm_main}, this implies the claim in (\ref{eq_mainthm_unbbcounter}). 
\end{proof}

What follows from the Theorem \ref{thm_main}, is that overall, the sequence $S^{(\sigma,\,\sigma)}_{j}$ is unbounded, in a seeming 
contradiction to the observation \cite{wang2014_042918}.

\begin{remark} Inequality (\ref{growth_Bis}) implies, using (\ref{eq_subsequence_Q})
  and the properties of continued fractions \cite{Khinchin_book_1997}, that
  the growth along the subsequence $j_{m}^{(Q)}$ is at least
  logarithmic. Note that the subsequence itself is at least
  exponentially sparse. However, choosing a very fast growing sequence
  $a_{n_{m}+1}$ one can also ensure any faster sublinear growth rate
  along $j_{m}^{(Q)},$ although it will come at the expense of making
  the subsequence of growth even sparser.

\end{remark}

We mention that it remains an open question whether the condition on the rotations $\sigma$ in Theorem \ref{thm_main}, or the respective condition on the initial conditions in (\ref{eq_initialcond_set}), are indeed necessary. In particular, an interesting question for future research may be to examine the case of $\sigma$ of bounded type (e.g. take $\sigma$ equal to the golden mean), which however requires development of a different proof strategy.

\section{Conclusions: implications of our results on irrational rotations for the Wang-Casati-Prosen counter in a $\ang{45} \!\! : \!\ang{45} \!\! : \! \ang{90}$ billiard}
\label{sec:conclusions}
The result of Theorem \ref{thm_main_main} suggests that the counter $K_{i}$ (see \eqref{K}) of \cite{wang2014_042918} shows absence of localization of the counter,  
for a $45^{\circ}\!\!:\!45^{\circ}\!\!:\!90^{\circ}$ billiards, at least for one particular initial condition, and, if proven, for all rational initial conditions. 
It seems unlikely localization can reemerge in the generic $\tilde{\alpha}/2\!:\!(90^{\circ}-\tilde{\alpha}/2)\!:\!90^{\circ}$ case of \cite{wang2014_042918}.  Numerical results of \cite{wang2014_042918} 
do however suggest localization. A potential resolution for this contradiction can be offered by a probable very slow growth of the counter. In what follows, we will 
provide an explicit example that confirms this slowness. 

\begin{figure}[h]
\begin{center}
\includegraphics[width=0.8\textwidth]{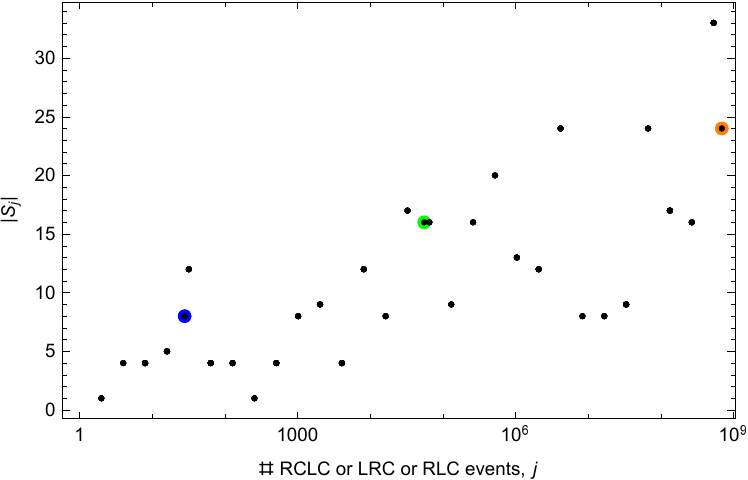}
\end{center}
\caption
{
Counter $S_{j}$ as a function of ``time'' $j$. We show 200 randomly chosen instances. In addition, 
we show $S_{j}$ at $j_{m=0,\,1,\,2}$ (see \eqref{eq_subsequence_Q} for $Q=1$ and $x_0 = 0$), along with the lower bound 
\eqref{growth_Bis}. Blue, green, and orange dots correspond to $m=0,\,1,\,2$, respectively.  
}
\label{f:S}
\end{figure}

Consider
\begin{align}
\sigma 
&=  [0;\,2,\,3,\,\ldots,\,n+1,\,\ldots]
\label{sigma_fixed}
\\
&\approx 0.433127
\nonumber
\,\,.
\end{align}
Using the recurrence relations (\ref{eq_recursion_p})-(\ref{eq_recursion_q}) with the initial conditions (\ref{eq_recursion_IC}) one can easily show that the subsequence of the rational approximants with
\begin{align*}
n_{m} = 4 m + 2
\end{align*}
will satisfy all conditions of Theorem~\ref{thm_main}. Thus, for $x_0 = 0$, the proof of Theorem~\ref{thm_main} (use (\ref{eq_subsequence_Q}) with $Q=1$) shows that $S_{j_{m}}$ is unbounded for
\begin{align*}
j_{m} = a_{4m+3} \, q_{4m+2}
\,\,
\end{align*}

The first three temporal instances showing a provable growth are
\begin{align*}
\begin{array}{ccccc}
m= & 0 & 1 & 2 & \ldots
\\
j= & 28 & 55688 & 695991252 & \ldots
\\
S_{j} = & +8 &  +16 &  +24 & \ldots
\\
\left|S_{j}\right| \ge 2 \, a_{4 m + 3} = & 8 & 16 & 24 & \ldots 
\end{array}
\,\,
\end{align*}
(see Fig.~\ref{f:S}).
The last line gives the lower bound \eqref{growth_Bis}. Notice that $\left|S_{j}\right|$ stays at its lowest value allowed, something 
that we can neither prove nor disprove at the moment. 
%
%
%
\begin{figure}[h]
\begin{center}
\includegraphics[width=0.8\textwidth]{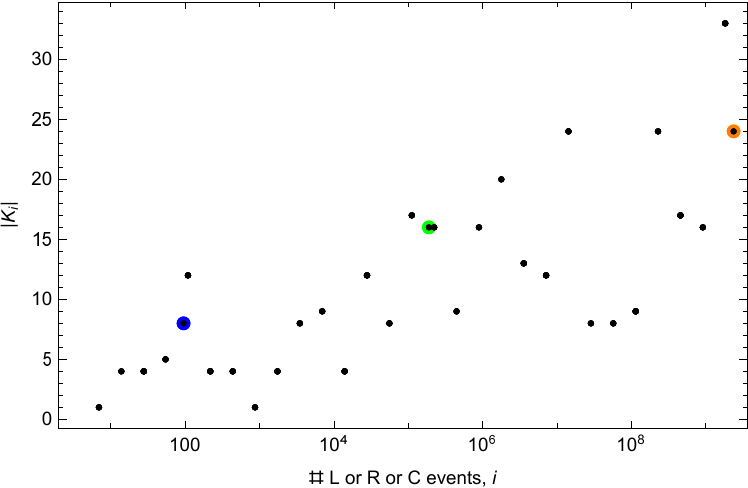}
\end{center}
\caption
{
Counter $K_{i}$ of  \cite{wang2014_042918} as a function of ``time'' $i$. In terms of \cite{wang2014_042918}, 
the rest is the same as in
Fig.~\ref{f:S}
}
\label{f:K}
\end{figure}

The growth of the counter, while present, appears to be slow. We suggest the following order of magnitude 
estimate for the growth of the counter. From (\ref{eq_recursion_q}), we get 
\begin{align*}
\ln(q_{n}) \sim \ln(a_{n}!) \approx a_{n} \ln(a_{n})  \Rightarrow \ln(a_{n}) \sim \frac{\ln(q_{n})}{\ln(\ln(q_{n}))} \Rightarrow S_{j} 
\gtrsim 2 \frac{\ln(j)}{\ln(\ln(j))-1}  
\,\,.
\end{align*}

The corresponding values of the ``physical'' counter in question, from Ref.~\cite{wang2014_042918}, can also be computed (see Fig.~\ref{f:K}).:
\begin{align*}
\begin{array}{ccccc}
m= & 0 & 1 & 2 & \ldots
\\
i= &  96 &  191184  &  2389426656  & \ldots
\\
K_{i} = & -8 &  -16 &  -24 & \ldots
\\
\left|K_{i}\right| \ge 2 \, a_{4 m + 3} = & 8 & 16 & 24 & \ldots 
\end{array}
\,\,.
\end{align*}
Recall that $S_{j} = (-1)^{i(j)+1} K_{i(j)}$. Also, $i(j)$ can be estimated as
\begin{align*}
i(j) \approx (\sigma \times 4 + (1-\sigma) \times 3)\times j \approx 3.43313\times j 
\,\,,
\end{align*}
producing $j = 96.1, \,191184.0,\,  2389426656.0  ,\,\ldots$ in the second line above. The estimate is using (a) the ergodicity 
of the irrational rotations, leading to $\text{Prob}\left(x \in \mathcal{I}_{\text{I}} \equiv [0,\,\sigma[\, \right)= \sigma$ and 
$\text{Prob}\left(x \in \mathcal{I}_{\text{II}} \equiv [\sigma,\,1[\, \right) = 1 - \sigma$ and (b) the fact that $x \in \mathcal{I}_{\text{I}}$
corresponds to a four-physical-events RCLC fragment, while $x \in \mathcal{I}_{\text{II}}$ corresponds to either LRC or RLC 
three-physical-events fragments.


\section*{Acknowledgements}

We thank Toma\v{z}  Prosen \maxim{}{and Giulio Casati} for
valuable discussions. S.J. was a 2020-21 Simons fellow. Her work was also partially supported by NSF DMS-2052899, DMS-2155211, and Simons 681675.
LPL is a member of DIM
SIRTEQ (Science et Ing\'{e}nierie en R\'{e}gion \^{I}le-de-France
pour les Technologies Quantiques).
M.O. was supported by the NSF under grants PHY-1607221 and PHY-1912542.

\bibliography{zai_counter_01.bib}

\end{document}